\def\BibTeX{{\rm B\kern-.05em{\sc i\kern-.025em b}\kern-.08em
    T\kern-.1667em\lower.7ex\hbox{E}\kern-.125emX}}
\newtheorem{claim}{Claim}
\newtheorem{theorem}{Theorem}
\newtheorem{remark}{Remark}
\title{Two-way Interference Channels}
\author{Changho Suh, I-Hsiang Wang and David Tse \\
\thanks{C. Suh is with the Research Laboratory of Electronics at Massachusetts Institute of Technology, Cambridge, USA (Email: $\mathsf{chsuh@mit.edu}$)}
\thanks{I. Wang is with the School of Computer and Communication Sciences, Ecole Polytechnique F\'ed\'erale (EPFL), Lausanne, Switzerland (Email: $\mathsf{i\textrm{-}hsiang.wang@epfl.ch}$)}
\thanks{D. Tse is with the Wireless Foundations in the University of California at Berkeley, Berkeley, USA (Email: $\mathsf{dtse@eecs.berkeley.edu}$).} }
\begin{document}

\maketitle

\begin{abstract}
We consider two-way interference channels (ICs) where forward and backward channels are ICs but not necessarily the same. We first consider a scenario where there are only two forward messages and feedback is offered through the backward IC for aiding forward-message transmission. For a linear deterministic model of this channel, we develop inner and outer bounds that match for a wide range of channel parameters.  We find that the backward IC can be more efficiently used for feedback rather than if it were used for sending its own independent backward messages. As a consequence, we show that feedback can provide a \emph{net} increase in capacity even if feedback cost is taken into consideration. Moreover we extend this to a more general scenario with two additional independent backward messages, from which we find that interaction can provide an arbitrarily large gain in capacity.
\end{abstract}
\begin{keywords}
Feedback Capacity, Interaction, Net Feedback Gain, Two-way Interference Channels
\end{keywords}

\section{Introduction}

The inherent two-way nature of communication links allows nodes to adapt their transmitted signals to the past received signals in exchanging their messages. Understanding the role of interaction
lies at the heart of two-way communication.
However, even for the point-to-point two-way communication first addressed by Shannon~\cite{shannon:two-way}, we are still lacking in our understanding of how to treat two-way information exchanges, and the underlying difficulty has impeded progress on this field over the past few decades.

Since interaction is enabled through the use of feedback, \emph{feedback} is a more basic research topic that needs to be explored towards understanding two-way communication.
The history of feedback traces back to Shannon~\cite{shannon:it} who showed that feedback provides no gain in capacity for discrete memoryless point-to-point channels.
Although feedback can indeed increase the capacity of multiple access channels~\cite{Gaarder:it,Cover:it81}, the increase in capacity for the Gaussian case is bounded by 1 bit for all channel parameters~\cite{Ozarow:it}.
Due to these results, traditionally it is believed that interaction has had little impact on increasing capacity.

In contrast, recent research shows that feedback provides more significant gain for communication over interference channels (ICs)~\cite{Kramer:it02,SuhTse}. Interestingly the feedback gain is shown to be unbounded, i.e., the gap between the feedback and nonfeedback capacities can be arbitrarily large for certain channel parameters. This result motivates us to challenge system implementation, since it relies on an idealistic scenario where perfect feedback is given for free. So a natural question that arises is to ask whether feedback can provide a \emph{net} increase in capacity even if feedback cost is taken into consideration. The first attempt to address this question has been made in~\cite{AlirezaSuhAves} where it was shown that one bit of feedback is worth at most one bit of capacity, when feedback links are modeled as rate-limited bit pipes. This implies that there is no net feedback gain unless feedback cost is cheaper.

However, this result does not well evaluate the net feedback gain in a more realistic scenario where feedback is provided through the backward IC, not through bit pipes. This motivates us to explore two-way ICs where both forward and backward channels are ICs but not necessarily the same.\footnote{In FDD systems, the forward and backward channels are on completely different bands. In TDD systems, those channels can be on different subcarriers or different coherence times.}
We first consider a simple scenario where there are only two forward messages and feedback is offered through the backward IC for helping forward-message transmission. To capture possibly different symbol rates between forward and backward channels,\footnote{In 3GPP-LTE and WiMAX systems employing an OFDM modulator, only a few subcarriers are assigned for feedback, which incurs different forward and backward (feedback) symbol rates.} we introduce a parameter $\lambda$ which indicates the fraction of time that the backward channel uses for feedback. The remaining $(1-\lambda)$ fraction of time can be used for other purpose, e.g., sending two independent backward messages.


For the Avestimehr-Diggavi-Tse (ADT) deterministic model~\cite{Salman:IT11} of this channel, we develop inner and outer bounds on the feedback capacity that match for a wide range of channel parameters.
As a result, we find that the capacity gain due to feedback can be strictly larger than the capacity gain due to the use of the backward IC for sending independent backward messages. In other words, the backward IC can be more efficiently used for sending feedback signals that aid forward-message transmission, rather than if it were used for transmitting its own independent backward messages. This finding shows that feedback can provide a net gain in capacity even if we take feedback cost into consideration, i.e., subtract the capacity gain due to the independent-backward-message transmission. The gain comes from the fact that the backward IC's use for feedback enables the exploitation of side information at forward-message-senders to make the backward IC effectively more capable. We also extend this idea to a more general scenario where there are two additional independent backward messages. As a consequence, we show that interaction can provide an arbitrarily large gain in capacity. Moreover we find that this gain can be larger when allowing the mixture of forward-and-backward messages for transmission.



{\bf Related Work:} 
In~\cite{Sahai:ITW09}, Sahai~\emph{et.al.} also considered the two-way IC with two forward messages, and they showed that there is no net feedback gain when forward-and-backward channels are the same and lie in the strong interference regime. On the other hand, we consider arbitrary forward and backward channels, and find that feedback can provide a net capacity gain for some channel regimes. In Section~\ref{section:Comparison}, we will provide details on this.

\section{Model}

Fig.~\ref{fig:generalfeedbackmodel} describes a two-way ADT deterministic IC where user $k$ wants to send its own message $W_k$ to user $\tilde{k}$, and user $\tilde{k}$ feeds back a function of its received signal over the backward IC during the $\lambda$ fraction of time, $k=1,2$. We assume that $W_1$ and $W_2$ are independent and uniformly distributed. For simplicity, we consider a setting where both forward and backward ICs are symmetric but not necessarily the same. In the forward IC, $n$ and $m$ indicate the number of signal bit levels for direct and cross links respectively. The corresponding values in the backward IC are denoted by $(\tilde{n}, \tilde{m})$. Let $X_k \in \mathbb{F}_2^{\max(n,m)}$ be user $k$'s transmitted signal and $V_{k} \in \mathbb{F}_2^m$ be a part of $X_k$ visible to user $\tilde{j} (\neq \tilde{k})$. Similarly let $\tilde{X}_k$ be user $\tilde{k}$'s transmitted signal and $\tilde{V}_k$ be a part of $\tilde{X}_k$ visible to user $j (\neq k)$.

\begin{figure}[t]
\begin{center}
{\epsfig{figure=./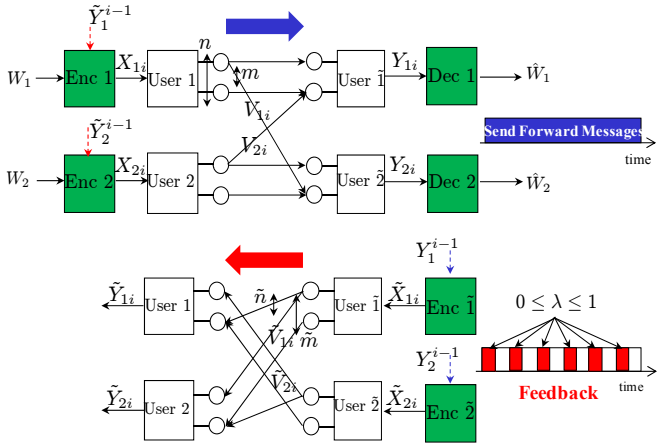, angle=0, width=0.48\textwidth}}
\end{center}
\caption{Two-way ADT deterministic interference channel (IC).} \label{fig:generalfeedbackmodel}
\end{figure}

The encoded signal $X_{ki}$ of user $k$ at time $i$ is a function of its own message and past feedback signals: $X_{ki}= f_{ki} (W_k, \tilde{Y}_{k}^{i-1})$.
We define $\tilde{Y}_{k}^{i-1}:= \{ \tilde{Y}_{kt} \}_{t=1}^{i-1}$ where $\tilde{Y}_{kt}$ denotes the feedback signal received at user $k$ at time $t$. User $\tilde{k}$'s transmitted signal $\tilde{X}_{ki}$ is a function of its past output sequences:
$\tilde{X}_{ki} = \tilde{f}_{ki} ({Y}_{k}^{i-1})$.
We assume that the $\lambda$ fraction of time is assigned to the backward channel to use for feedback. This induces $\sum_{i=1}^{N} H(\tilde{X}_{ki}) \leq N \lambda \max( \tilde{n}, \tilde{m} )$, where $N$ indicates code length. We define  $\tilde{X}_{k}^N$ as a whole vector that includes feedback signals as well as null signals (mapping to no feedback transmission), e.g., $\tilde{X}_{k}^N = \{\varnothing,\tilde{X}_{k2}, \varnothing, \tilde{X}_{k4}, \cdots \}$. A rate pair $(R_1,R_2)$ is said to be achievable if there exists a family of codebooks and  encoder/decoder functions such that the average decoding error probabilities go to zero as code length $N$ tends to infinity. The capacity region ${\cal C}$ is the closure of the set of achievable rate pairs. The sum capacity is defined as $C_{\sf sum} = \sup \left\{ R_1 + R_2: (R_1,R_2) \in \mathcal{C} \right\}$.

\section{Main Results}
\label{sec:MainResults}

\begin{theorem}[Achievability]
\label{thm:achievability}
Let $\alpha=\frac{m}{n}$.
\begin{align*}
R_{\sf sum} = \left\{
                \begin{array}{ll}
                \min \left\{ C_{\sf no} + 2\lambda \tilde{n}, C_{\sf pf} \right \},&\hbox{$\alpha \geq 2$,} \\
 \min\left\{ C_{\sf no} + 2 \lambda  \max (\tilde{n} - \tilde{m}, \tilde{m}), C_{\sf pf}\right \},& \hbox{$\alpha < \frac{2}{3}$,} \\
               C_{\sf no},& \hbox{o.w.}
              \end{array}
              \right.
\end{align*}
where $C_{\sf no}$ and $C_{\sf pf}$ indicate the nonfeedback and perfect-feedback sum capacities respectively~\cite{ElGamal:it82,bresler:europe,SuhTse}:
\begin{align*}
&C_{\sf no}=
\left\{
                \begin{array}{ll}
                2n,&\hbox{$\alpha \geq 2$,} \\
                2 \max (n-m, m),& \hbox{$\alpha < \frac{2}{3}$,} \\
              \max( 2n - m, m),& \hbox{o.w.}
              \end{array}
              \right.
\\
&C_{\sf pf}= \max( 2n - m, m).
\end{align*}
\end{theorem}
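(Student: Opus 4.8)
The plan is to prove achievability separately in the three regimes, in each case reducing to two ingredients taken from the cited prior work: the optimal nonfeedback scheme for the symmetric ADT deterministic IC~\cite{ElGamal:it82,bresler:europe}, which attains $C_{\sf no}$ while using \emph{no} backward resources, and the optimal perfect-feedback scheme~\cite{SuhTse}, which attains $C_{\sf pf}$. In the middle regime $\frac{2}{3}\le\alpha<2$ there is nothing to do beyond invoking the nonfeedback result, since there $C_{\sf no}=\max(2n-m,m)=C_{\sf pf}$: feedback cannot raise the sum-rate, so $R_{\sf sum}=C_{\sf no}$ is attained by ignoring the backward IC entirely.

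For $\alpha\ge 2$ and for $\alpha<\frac{2}{3}$ I would combine the two ingredients by resource-splitting: on a $\beta$ fraction of the forward channel uses run the perfect-feedback scheme, and on the remaining $(1-\beta)$ fraction run the nonfeedback scheme, giving a forward sum-rate $\beta C_{\sf pf}+(1-\beta)C_{\sf no}=C_{\sf no}+\beta\,(C_{\sf pf}-C_{\sf no})$. The fraction $\beta$ is then limited by how much feedback the backward IC can carry during its allotted $\lambda$ fraction of time, and two sub-claims pin it down. First, one can run a feedback-frugal version of the perfect-feedback scheme that needs only $C_{\sf pf}-C_{\sf no}$ bits of feedback (summed over the two forward transmitters) per forward channel use on which it is active — i.e.\ each extra unit of forward sum-rate beyond $C_{\sf no}$ costs exactly one unit of feedback, in accordance with the converse of~\cite{AlirezaSuhAves}. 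Second, during the $\lambda$-fraction backward phase each forward receiver can reliably relay $r_{\sf eff}$ feedback bits to its paired forward transmitter per forward channel use, where $r_{\sf eff}=\tilde n$ when $\alpha\ge 2$ and $r_{\sf eff}=\max(\tilde n-\tilde m,\tilde m)$ when $\alpha<\frac{2}{3}$. Given these, the feedback budget is $2\lambda r_{\sf eff}$ per forward use, so $\beta=\min\{1,\,2\lambda r_{\sf eff}/(C_{\sf pf}-C_{\sf no})\}$ and $R_{\sf sum}=C_{\sf no}+\min\{2\lambda r_{\sf eff},\,C_{\sf pf}-C_{\sf no}\}=\min\{C_{\sf no}+2\lambda r_{\sf eff},\,C_{\sf pf}\}$, which is exactly the claimed expression in both regimes. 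Causality is handled by the usual block-Markov interleaving: feedback generated while processing forward block $t$ is sent over the backward IC and drives the perfect-feedback part of block $t+1$, with a vanishing rate penalty as the number of blocks grows, and the budget $\sum_i H(\tilde X_{ki})\le N\lambda\max(\tilde n,\tilde m)$ is respected because the backward IC is used only on a $\lambda$ fraction of the slots.

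The crux — and the step I expect to be the main obstacle — is the second sub-claim: showing the backward IC delivers $r_{\sf eff}$ feedback bits per user, which in general exceeds what it would deliver carrying its own independent messages. The mechanism is side-information cancellation. The feedback signal that forward-receiver $j$ wants to relay is a deterministic function of forward-receiver $j$'s past received forward signal, whose cross-link component is $V_k$, produced by forward-transmitter $k$ itself; since forward-transmitter $k$ (= backward-receiver $k$) knows $V_k$, it can strip the corresponding part of the backward interference before decoding forward-receiver $k$'s feedback. When $\alpha\ge 2$ the forward cross link dominates, so forward-receiver $j$'s signal — hence the backward interference at forward-transmitter $k$ — is essentially the known quantity $V_k$, and after cancellation the backward direct link is seen interference-free, yielding $r_{\sf eff}=\tilde n$; when $\alpha<\frac{2}{3}$ the forward direct link dominates and only part of the interference is cancellable, leaving the effective per-user feedback rate at $\max(\tilde n-\tilde m,\tilde m)$. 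Making this rigorous needs the careful bit-level bookkeeping the ADT model demands: identifying exactly which received levels at each forward receiver must be relayed to complete the perfect-feedback scheme, checking that these are precisely the levels the paired forward transmitter can cancel on the backward IC, and verifying that the delivered bits raise the forward sum-rate one-for-one up to the $C_{\sf pf}$ ceiling. The rest — composing the resource-split, the block-Markov interleaving, and the backward-budget constraint without rate loss — is routine by comparison.
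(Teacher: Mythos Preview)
Your overall framework—time-sharing between the nonfeedback scheme and a feedback-frugal version of the perfect-feedback scheme, with block-Markov interleaving and side-information cancellation on the backward IC—matches the paper's approach, and your treatment of the regime $\alpha\ge 2$ is correct and essentially identical to the paper's Type~I scheme: user $\tilde j$'s feedback is a function of the forward cross-link input $V_k$, so user $k$ can strip the backward interference and the backward IC behaves as two clean direct links of rate $\tilde n$.

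The gap is in the regime $\alpha<\tfrac{2}{3}$. Your described mechanism—``strip the backward interference, then decode forward-receiver $k$'s feedback on the backward direct link''—does \emph{not} work here and does not yield $r_{\sf eff}=\max(\tilde n-\tilde m,\tilde m)$. When $\alpha<\tfrac{2}{3}$, forward-receiver $\tilde j$'s signal contains user $j$'s \emph{private} bits (e.g., $b_2$ in the paper's $(n,m)=(2,1)$ example), which user $k$ does not know; so if $\tilde j$ naively feeds back its received signal, user $k$ cannot cancel the backward cross-link interference, and the direct-link-feedback route stalls. The paper's fix is a different routing of feedback, split into two sub-cases by $\tilde\alpha$. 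For $\tilde\alpha\ge\tfrac{1}{2}$ (Type~II), user $\tilde k$ sends user $k$'s \emph{own} common bit; the useful feedback then reaches user $j$ through the backward \emph{cross} link (rate $\tilde m$), while the backward direct link carries user $k$'s own symbol and is fully cancellable at user $k$. For $\tilde\alpha<\tfrac{1}{2}$ (Type~III), user $\tilde k$ instead uses only the backward \emph{private} levels (rate $\tilde n-\tilde m$), which by construction do not interfere at the other backward receiver. Together these give $r_{\sf eff}=\max(\tilde n-\tilde m,\tilde m)$, but neither is the ``cancel interference, read direct link'' picture you sketched; identifying this cross-link/private-level routing is the substantive step your proposal is missing.
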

\begin{proof}
See Section~\ref{sec:achievability}.
\end{proof}


\begin{theorem}[Outer Bound]
\label{thm:outerbound}
\begin{align*}
C_{\sf sum} \leq \left\{
                \begin{array}{ll}
                \min \left\{ C_{\sf no} + 2\lambda \tilde{n}, C_{\sf pf} \right \},&\hbox{$\alpha \geq 2$,} \\
 \min\left\{ C_{\sf no} + 2 \lambda  \max (\tilde{n}, \tilde{m}), C_{\sf pf}\right \},& \hbox{$\alpha < \frac{2}{3}$,} \\
               C_{\sf no},& \hbox{o.w.}
              \end{array}
              \right.
\end{align*}
\end{theorem}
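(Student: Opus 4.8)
The plan is to obtain the bound from three ingredients: the universal bound $C_{\sf sum}\le C_{\sf pf}$; the elementary identity $C_{\sf no}=C_{\sf pf}=\max(2n-m,m)$ that holds whenever $\tfrac{2}{3}\le\alpha<2$, which already settles the middle case; and, for the two extreme regimes, the additive ``cost-of-feedback'' bounds $C_{\sf sum}\le C_{\sf no}+2\lambda\tilde n$ for $\alpha\ge2$ and $C_{\sf sum}\le C_{\sf no}+2\lambda\max(\tilde n,\tilde m)$ for $\alpha<\tfrac{2}{3}$. The first ingredient is a genie argument: starting from Fano's inequality $N(R_1+R_2)\le I(W_1;Y_1^N)+I(W_2;Y_2^N)+N\epsilon_N$ with $\epsilon_N\to0$, note that $\tilde X_{ki}=\tilde f_{ki}(Y_k^{i-1})$, so each feedback symbol $\tilde Y_{ki}$ is a deterministic function of $(Y_1^{i-1},Y_2^{i-1})$; hence giving both encoders the genie $(Y_1^{i-1},Y_2^{i-1})$ at every time $i$ can only enlarge $\mathcal{C}$, and the enlarged channel is a symmetric forward IC with perfect output feedback at both transmitters, whose sum-capacity does not exceed $C_{\sf pf}=\max(2n-m,m)$ by (an easy extension of) the perfect-feedback converse of~\cite{SuhTse}. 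This yields $C_{\sf sum}\le C_{\sf pf}$, and since $C_{\sf no}=C_{\sf pf}$ for $\tfrac{2}{3}\le\alpha<2$, the middle case follows at once.

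For the additive bounds I would use a genie-aided, per-user converse that separates a feedback-free contribution from a feedback cost controlled by the backward budget $\sum_{i=1}^{N}H(\tilde X_{ki})\le N\lambda\max(\tilde n,\tilde m)$. Take $\alpha\ge2$. From $NR_k\le I(W_k;Y_k^N)+N\epsilon_N$, I would hand encoder $k$ a genie consisting of the other user's message (together with just enough side information to pin down the interference arriving at receiver $k$) and encoder $k$'s own feedback stream $\tilde Y_k^N$; since $X_k^N$ is a deterministic function of $(W_k,\tilde Y_k^N)$ and $Y_k^N$ is then determined once the interfering signal is known, the chain rule splits the bound into (a) a residual entropy bounded by $\sum_i H\bigl(Y_{ki}\mid\text{interference at receiver }k\bigr)\le Nn$ --- this is exactly where $\alpha\ge2$ enters, so that stripping the strong interference leaves only the $n$ direct levels --- plus (b) mutual-information terms that touch the backward channel only through $\tilde Y_k^N$, i.e., only through the rate-limited vectors $\tilde X_1^N,\tilde X_2^N$. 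Decomposing $\tilde Y_{ki}$ into its $\tilde n$ direct levels from $\tilde X_{ki}$ and its $\tilde m$ cross levels from $\tilde X_{ji}$ ($j\ne k$), and using that the direct part contributes at most $\min(\tilde n,H(\tilde X_{ki}))$ per symbol while $\sum_i H(\tilde X_{ki})\le N\lambda\max(\tilde n,\tilde m)$, the part-(b) terms --- summed over $k=1,2$ --- come to an overhead of at most $2N\lambda\tilde n$, which gives $C_{\sf sum}\le2n+2\lambda\tilde n=C_{\sf no}+2\lambda\tilde n$. The case $\alpha<\tfrac{2}{3}$ shares the skeleton but needs two changes: the feedback-free part must be reduced to $2\max(n-m,m)$ rather than $2n$, which requires feeding the genie only the interference pattern of the other user in the El~Gamal--Costa / Bresler--Tse style~\cite{ElGamal:it82,bresler:europe}, and the sharpening of the feedback coefficient from $\max(\tilde n,\tilde m)$ to $\tilde n$ is unavailable, so the looser $2\lambda\max(\tilde n,\tilde m)$ is kept.

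The hard part will be the accounting in the additive bounds: one has to choose the genie and the order of conditioning so that \emph{simultaneously} the feedback-free term equals exactly $n$ (resp.\ $\max(n-m,m)$) per user --- matching the known nonfeedback converse --- and every residual mutual-information term sees the backward channel only through the budgeted vectors $\tilde X_k^N$, with no double counting, so that $\sum_i H(\tilde X_{ki})\le N\lambda\max(\tilde n,\tilde m)$ is applied exactly once per user. In particular, dealing with the cross-terms that couple $W_k$ to the other user's transmitted signal $X_j^N$ when the two per-user bounds are added, and --- in the strong regime --- arguing that the $\tilde m$-level cross component of each feedback stream carries no information usable beyond the $\tilde n$ direct levels (this is precisely what yields the coefficient $\tilde n$ instead of $\max(\tilde n,\tilde m)$, matching Theorem~\ref{thm:achievability}; in the weak regime the overheard cross component cannot be dismissed, so only $\max(\tilde n,\tilde m)$ is obtained), are the delicate steps. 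If the per-user bookkeeping turns out to be too loose, a fallback is to run the feedback-cost estimate recursively in $i$ with a purely causal genie, so that the backward entropies telescope instead of being charged twice.
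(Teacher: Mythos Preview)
Your three-ingredient plan matches the paper's structure exactly: the paper proves the perfect-feedback bound~(\ref{eq:perfectfeedbackbound}), a cutset bound~(\ref{eq:cutsetbound}) giving $2n+2\lambda\tilde n$, and a new bound~(\ref{eq:newbound}) giving $2\max(n-m,m)+2\lambda\max(\tilde n,\tilde m)$, and the theorem follows by combining them in each regime. Your treatment of the middle case and of $C_{\sf sum}\le C_{\sf pf}$ is fine. For $\alpha\ge2$, the paper's cutset argument gives the genie $(W_2,\tilde Y_2^N)$ to \emph{decoder}~1 (not encoder~1), then conditions on $\tilde X_{1i}$ (a function of $Y_1^{i-1}$) so that $H(\tilde Y_{2i}\mid\tilde X_{1i})\le\tilde n$ on active slots; this is how the coefficient sharpens to $\tilde n$. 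Your description is close to this but has the indices and the encoder/decoder role somewhat tangled.

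The genuine gap is the $\alpha<\tfrac{2}{3}$ bound. You propose to ``feed the genie only the interference pattern in the El~Gamal--Costa/Bresler--Tse style,'' but that converse is not a genie argument; it is an entropy identity that hinges on $H(Y_1^N\mid W_1)=H(V_2^N\mid W_1)$, which holds only because $X_1^N$ is a function of $W_1$ alone. With feedback this identity fails, and your proposal does not say what replaces it. The paper's resolution is not a genie at all: starting from $I(W_1;Y_1^N)+I(W_2;Y_2^N)$ it adds and subtracts $H(V_1^N),H(V_2^N)$ to expose $H(Y_1^N\mid V_1^N)+H(Y_2^N\mid V_2^N)$ plus a residual of the form
\[
I(V_k^N;W_l)+H(V_l^N\mid W_k)-H(Y_k^N\mid W_k)-H(V_k^N\mid Y_k^N),
\]
and then proves a dedicated lemma (Claim~\ref{claim:blancebound}) showing this residual is at most $H(\tilde Y_k^N\mid W_k)$. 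That claim is the heart of the converse; its proof uses the functional relation $H(Y_1^N\mid W_1,\tilde Y_1^N)=H(V_2^N\mid W_1,\tilde Y_1^N)$ (valid once you condition on \emph{both} $W_1$ and the feedback $\tilde Y_1^N$) together with several chain-rule rearrangements. Your ``per-user genie plus telescoping'' sketch does not produce this inequality, and without it the feedback-free term will not collapse to $\max(n-m,m)$ while the feedback cost is charged only once.
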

\begin{proof}
See Section~\ref{sec:outerbound}.
\end{proof}
\begin{theorem}[Sum Capacity]
\label{thm:sumcapacity}
The inner bound and the outer bound (given in Theorems~\ref{thm:achievability} and $\ref{thm:outerbound}$ respectively) match and thus establish the sum capacity, except for the regime of $\left( \alpha < \frac{2}{3}, \tilde{\alpha} < 1 \right)$, where $\tilde{\alpha}:=\frac{\tilde{m}}{\tilde{n}}$.
\end{theorem}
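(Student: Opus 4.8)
The plan is to prove Theorem~\ref{thm:sumcapacity} by a direct, regime-by-regime comparison of the inner-bound expression $R_{\sf sum}$ from Theorem~\ref{thm:achievability} with the outer-bound expression for $C_{\sf sum}$ from Theorem~\ref{thm:outerbound}. Both quantities are already in closed form in terms of $(n,m,\tilde n,\tilde m,\lambda)$, so establishing the sum capacity reduces entirely to checking when these two formulas agree, and the proof is essentially arithmetic.

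First I would dispose of the two easy regimes. For $\alpha \geq 2$, both bounds are literally the same expression $\min\{C_{\sf no}+2\lambda\tilde n, C_{\sf pf}\}$ with the common values $C_{\sf no}=2n$ and $C_{\sf pf}=\max(2n-m,m)$, so they coincide and pin down $C_{\sf sum}$. Likewise, in the ``otherwise'' regime $\frac{2}{3}\leq\alpha<2$ both bounds equal $C_{\sf no}=\max(2n-m,m)$, so there is again nothing to check. Thus the only regime requiring real attention is $\alpha<\frac{2}{3}$.

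In that regime the inner bound is $\min\{C_{\sf no}+2\lambda\max(\tilde n-\tilde m,\tilde m),\,C_{\sf pf}\}$ and the outer bound is $\min\{C_{\sf no}+2\lambda\max(\tilde n,\tilde m),\,C_{\sf pf}\}$, with $C_{\sf no}=2\max(n-m,m)$ and $C_{\sf pf}=2n-m$ (the latter because $\alpha<\frac{2}{3}<1$ forces $2n-m>m$). The sole discrepancy is the coefficient multiplying $2\lambda$: $\max(\tilde n-\tilde m,\tilde m)$ versus $\max(\tilde n,\tilde m)$. I would observe the elementary fact that these two quantities are equal exactly when $\tilde m\geq\tilde n$, i.e.\ $\tilde\alpha\geq 1$ (then $\tilde n-\tilde m\leq 0\leq\tilde m$, so both equal $\tilde m$). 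Hence for $\tilde\alpha\geq 1$ the two bounds coincide termwise and the sum capacity is established. For $\tilde\alpha<1$ with $\tilde m\geq 1$ one has $\max(\tilde n-\tilde m,\tilde m)<\tilde n=\max(\tilde n,\tilde m)$; since one also checks $C_{\sf no}<C_{\sf pf}$ strictly in this regime (for $\alpha<\frac12$, $C_{\sf no}=2n-2m<2n-m$; for $\frac12\leq\alpha<\frac23$, $C_{\sf no}=2m<2n-m$), it follows that for all sufficiently small $\lambda>0$ both minima are achieved by their first arguments and the inner bound lies strictly below the outer bound. Therefore the bounds fail to match there, which is precisely the stated exception $\left(\alpha<\frac{2}{3},\,\tilde\alpha<1\right)$.

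I do not expect any genuine technical obstacle: the argument is a finite case analysis together with the two arithmetic facts ``$\max(\tilde n-\tilde m,\tilde m)=\max(\tilde n,\tilde m)\iff\tilde\alpha\geq 1$'' and ``$C_{\sf no}<C_{\sf pf}$ when $\alpha<\frac23$''. The only mild care needed is to verify that taking the minimum with the perfect-feedback cap $C_{\sf pf}$ does not accidentally reconcile the two formulas when $\tilde\alpha<1$ --- which is why one passes to small $\lambda$, where the feedback term has not yet saturated against $C_{\sf pf}$. (For completeness one may note the harmless sliver $\tilde m=0$ inside $\tilde\alpha<1$, where the two formulas do agree; this does not affect the statement, which only asserts that capacity is \emph{not in general} established in that regime.)
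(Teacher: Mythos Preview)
Your proposal is correct and follows exactly the same route as the paper: observe that the inner and outer bounds are identical except in the regime $\alpha<\tfrac{2}{3}$, where the only discrepancy is the factor $\max(\tilde n-\tilde m,\tilde m)$ versus $\max(\tilde n,\tilde m)$, and these coincide precisely when $\tilde\alpha\geq 1$. Your additional verification that $C_{\sf no}<C_{\sf pf}$ and the pass to small $\lambda$ (to rule out accidental equality via the $C_{\sf pf}$ cap) simply makes rigorous what the paper leaves implicit.
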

\begin{proof}
The proof is immediate. Note that the inner and outer bounds differ only when $\alpha < \frac{2}{3}$. The inner bound contains $2 \lambda \max (\tilde{n}-\tilde{m}, \tilde{m})$, while the outer bound has $2 \lambda \max (\tilde{n}, \tilde{m})$. These two terms coincide if $\tilde{\alpha} \geq 1$; differ otherwise.
\end{proof}

\begin{figure}[t]
\begin{center}
{\epsfig{figure=./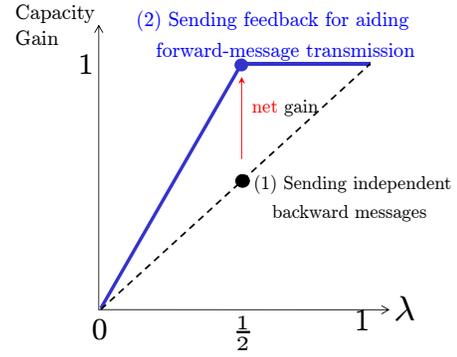, angle=0, width=0.32\textwidth}}
\end{center}
\caption{Net feedback gain: $(n,m)=(2,1)$ and $(\tilde{n},\tilde{m})=(1,1)$.
} \label{fig:netfeedbackgainstrong}
\end{figure}

{\bf Net Feedback Gain:}
Note that in two-way communication, there are two ways of using the backward IC: (1) Sending independent backward messages; (2) Sending feedback signals to help forward-message transmission. Using the above theorems, we will now explain why the backward IC can be more efficiently used for the second purpose, rather than if it were used for the first purpose. Consider an example where $(n,m)=(2,1)$ and $(\tilde{n},\tilde{m})=(1,1)$. Suppose that the $\lambda$ fraction of time is assigned for sending independent backward messages. The capacity gain offered by the backward IC is
\begin{align}
\label{eq:delC1}
\Delta C_{\sf sum} = \lambda C_{\sf B} = \lambda \textrm{ bits},
\end{align}
where $C_{\sf B}$ denotes the nonfeedback sum capacity of the backward IC. In this example, $C_{\sf B}=1$. Suppose that the $\lambda$ fraction of time is now assigned for feedback. Then, due to Theorems~\ref{thm:achievability} and $\ref{thm:sumcapacity}$, the capacity gain offered by the backward IC is
\begin{align}
\label{eq:delC2}
\Delta C_{\sf sum} &=  \min \left\{ 2 \lambda \max (\tilde{n}-\tilde{m}, \tilde{m}), \min (m, 2n-3m) \right \} \nonumber \\
&= \min \left\{ 2\lambda, 1 \right\} \textrm{ bits}.
\end{align}
Fig.~\ref{fig:netfeedbackgainstrong} plots these two capacity gains as a function of $\lambda$. Notice that when $\lambda=0.5$, the capacity gain due to the first purpose is $0.5$ bits; on the other hand, the capacity gain due to the second purpose is $1$ bit.
Without feedback cost, the capacity gain due to feedback is 1 bit. Taking the feedback cost into consideration, we now subtract the capacity gain due to the first purpose; hence, a net gain in capacity is $1 -0.5=0.5$ bits. This implies net feedback gain.

\section{Proof of Achievability}
\label{sec:achievability}

{\bf Review of the Perfect Feedback Scheme~\cite{SuhTse}:} Let us start by examining the perfect feedback scheme. In the very strong interference regime of $\alpha \geq 2$, feedback creates a better alternative path, e.g., $[User1\rightarrow User\tilde{2} \rightarrow \textrm{feedback} \rightarrow User2 \rightarrow User\tilde{1}]$, thus enabling each user to relay the other user's information.
In the regime of $\alpha < \frac{2}{3}$, information is split into common and private parts.
Feedback allows each user to decode other user's common information to forward it later.
This forwarding enables user $\tilde{k}$ to refine the corrupted information in the past while not interfering with the other user's transmission. Here the key observation is that perfect feedback enables each user to decode the other user's common information. Our model, however, provides feedback in the limited fashion, rendering the decoding operation challenging. We will next show how to overcome this challenge. Our feedback strategy is categorized into three types depending on the values of $(\alpha,\tilde{\alpha})$.

\subsection{Type I: $\alpha \geq 2$}

\begin{figure}[t]
\begin{center}
{\epsfig{figure=./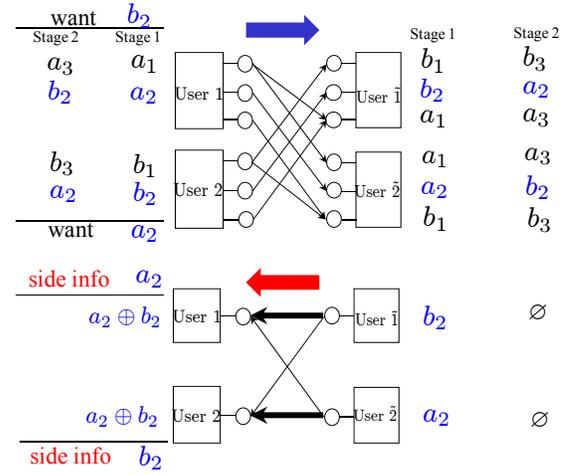, angle=0, width=0.4\textwidth}}
\end{center}
\caption{Type I achievable scheme for $\alpha:=\frac{m}{n}=3$,  $\tilde{\alpha}:=\frac{\tilde{m}}{\tilde{n}}=1$ and $\lambda = \frac{1}{2}$. When \emph{side information} is exploited at user 1 and 2, the backward IC can be cast as two non-interfering point-to-point channels. This enables \emph{two}-bit transmission of $(b_2, a_2)$ over \emph{one}-bit-capacity backward IC, thus providing net feedback gain.
} \label{fig:strong}
\end{figure}

Let us explain the scheme with an example in Fig.~\ref{fig:strong}, where $(n,m)=(1,3)$, $(\tilde{n},\tilde{m})=(1,1)$ and $\lambda = \frac{1}{2}$. Similar to the perfect feedback scheme, it has two stages. In the first stage, each user starts with sending $C_{\sf no}/2=n$ bits on the upper levels. In this example, each user sends 1 bit. On the next lower level, it sends an additional bit; as a result, user 1 and 2 send $(a_1,a_2)$ and $(b_1,b_2)$ respectively. User $\tilde{1}$ then gets $a_1$ while receiving $(b_1,b_2)$ from user 2. Similarly user $\tilde{2}$ gets $b_1$ and $(a_1,a_2)$.

Similar to the perfect feedback scheme, user $\tilde{k}$ feeds back the other user's information (not received yet at the desired place). But the difference here is that this transmission is through the backward IC. Suppose that user $\tilde{1}$ and $\tilde{2}$ simultaneously send $b_2$ and $a_2$ respectively.
Then, it seems impossible to decode these two bits, since each user receives the same signal. It seems that two time slots are needed to feed back these two bits. However, we can actually accomplish this in one shot. The idea is to exploit side information. Exploiting $a_2$ as side information, user 1 can decode $b_2$, and similarly user 2 can decode $a_2$. Here the key observation is that with \emph{side information} at user 1 and 2, the backward IC can be viewed as \emph{two non-interfering point-to-point channels}. In the second stage, each user sends its own fresh information on the first level and forwards the other user's information on the second level: user 1 and 2 send $(a_3,b_2)$ and $(b_3,a_2)$ respectively. User $\tilde{1}$ can then decode its own fresh information $a_3$ as well as $a_2$ which was not received in the first stage. Similarly user $\tilde{2}$ can decode $(b_3,b_2)$. Therefore, we can achieve the sum rate of $3$, showing a $50\%$ improvement from the nonfeedback capacity of 2. Note that the backward channel is utilized once every two slots and therefore we satisfy the constraint of $\lambda =\frac{1}{2}$.

We now extend this to arbitrary values of $(n,m)$, $(\tilde{n},\tilde{m})$ and $\lambda$. In the first stage, each user starts with sending $C_{\sf no}/2=n$ bits on the upper levels. On the next lower levels, it sends the following number of additional bits:
\begin{align}
\label{eq:feedbackbit_strong}
\min  \{ 2 \lambda \tilde{n},  C_{\sf fb}-C_{\sf no}  \},
\end{align}
where $C_{\sf fb}-C_{\sf no} = m-2n$ in the regime $\alpha \geq 2$. Notice that the maximum number of bits that can be squeezed in is limited by $C_{\sf fb}-C_{\sf no}$.
Recall from the above example that the backward IC can be cast into two non-interfering point-to-point channels. So the effective capacity of the backward IC per user for the purpose of feedback is $\tilde{n}$. We multiply this by 2, as two stages are employed.\footnote{Here $2 \lambda \tilde{n}$ can be a non-integer rational number, which is incompatible with an input of the ADT model. However, we can resolve this by employing multiple time slots, say $M$, within each stage, since $2 \lambda \tilde{n} M$ can be made an integer.}
Through the backward IC, user $\tilde{k}$ can now relay the amount $\min  \{ 2 \lambda \tilde{n},  C_{\sf fb}-C_{\sf no}  \}$ of the other user's information. In the second stage, each user sends $C_{\sf no}/2$ fresh bits on the upper levels and the other user information (decoded with feedback) on the next lower levels. User $\tilde{k}$ can then decode $C_{\sf no} + \min \left \{ 2 \lambda \tilde{n},  C_{\sf fb}-C_{\sf no} \right \}$ during the two stages. Therefore, we can achieve $R_{\sf sum}$ in Theorem~\ref{thm:achievability}.

\begin{remark}[Exploiting Side Information]
Note in Fig.~\ref{fig:strong} that the two bits $(b_2,a_2)$ can be fed back through the one-bit-capacity backward IC. This is because each user can cancel the seemingly interfering information by exploiting its own information as side information. This enables the net feedback gain: a capacity increase of 1 bit with $\lambda C_{\sf B}=\frac{1}{2}$ bits of the backward IC's original capability. The nature of the feedback gain offered by side information coincides with that of the butterfly example~\cite{ahlswede:it} and many other network-coding examples~\cite{Wu:05,Katti:SIGCOMM06,BarYossef:FOCS06, SuhTse, Tassiulas:NetCom09,MaddahAli:allterton10}. $\Box$
\end{remark}

\subsection{Type II: $\alpha < \frac{2}{3}, \tilde{\alpha} \geq \frac{1}{2}$}

\begin{figure}[t]
\begin{center}
{\epsfig{figure=./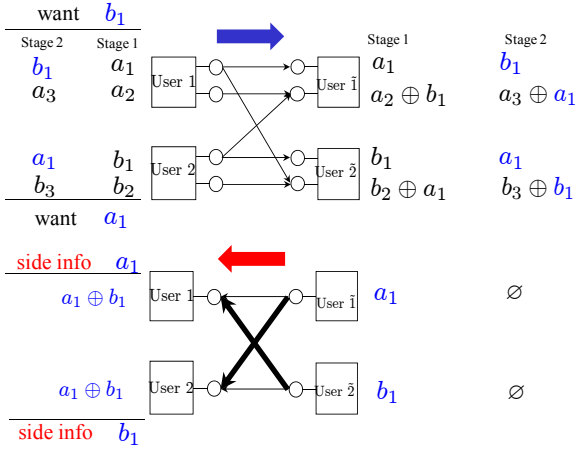, angle=0, width=0.42\textwidth}}
\end{center}
\caption{Type II achievable scheme for $\alpha:=\frac{m}{n}=\frac{1}{2}$,  $\tilde{\alpha}:=\frac{\tilde{m}}{\tilde{n}}=1$ and $\lambda = \frac{1}{2}$.
} \label{fig:weak1}
\end{figure}

We explain the second-type scheme using an example illustrated in Fig.~\ref{fig:weak1}. Here $(n,m)=(2,1)$, $(\tilde{n},\tilde{m})=(1,1)$ and $\lambda = \frac{1}{2}$. Similar to Type I, it has two stages. In the first stage, each user starts with sending $C_{\sf no}/2 = \max (n-m,m)$ bits, comprised of $(n-m)$ private bits and $(2m - n)^+$ common bits. In this example, each user sends one private bit only. On the upper common levels, each user sends the following number of additional bits:
\begin{align}
\label{eq:feedbackbit_weak1}
\min  \{ 2 \lambda \tilde{m}, C_{\sf pf} - C_{\sf no}  \},
\end{align}
where $C_{\sf pf} - C_{\sf no} = \min (m, 2n-3m)$ in the regime $\alpha < \frac{2}{3}$.
In the sequel, we will show that in this regime the backward IC can be viewed as two non-interfering \emph{cross} point-to-point channels, thus making the effective capacity of the backward IC per user $\tilde{m}$. So we have used the $2\lambda \tilde{m}$ for the number of additional bits. Similarly, a factor of 2 is multiplied due to the two-stage nature of the scheme. In this example, $2\lambda \tilde{m}=1$. So user 1 and 2 send $a_1$ and $b_1$ respectively.

In the perfect feedback scheme, user 1 wanted to know the other user information $b_1$ which caused interference to its desired symbol $a_2$. Similarly user 2 wanted to know $a_1$. And to satisfy this demand, the two bits $(a_2 \oplus b_1, b_2 \oplus a_1)$ were fed back to the users. Suppose we mimic this transmission: user $\tilde{1}$ and $\tilde{2}$ send $a_2 \oplus b_1$ and $b_2 \oplus a_1$ respectively. Unfortunately this does not work. User 1 cannot decode $b_1$ from the received signal $a_1\oplus a_2 \oplus b_1 \oplus b_2$ and similarly user 2 cannot decode $a_1$. It seems that two time slots are needed to feed back these two bits. However, we can satisfy the demand in one shot.
Note that the symbol $b_1$ wanted by user 1 is available at user $\tilde{2}$. Similarly the symbol $a_1$ wanted by user 2 is available at user $\tilde{1}$. Suppose we now send these two bits instead. User 1 and 2 can then decode $b_1$ and $a_1$ respectively, exploiting its own signal as side information. The key observation here is that exploiting side information at user 1 and 2, the backward IC becomes equivalent to two non-interfering cross point-to-point channels.
This enables feeding back the following number of bits: $\min  \{ 2 \lambda \tilde{m}, C_{\sf pf} - C_{\sf no}  \}$. In the second stage, each user starts with the nonfeedback scheme and additionally sends the other user's information (decoded with feedback) on vacant common levels. User $\tilde{k}$ can then decode $C_{\sf no} + \min \left \{ 2 \lambda \tilde{m},  C_{\sf fb}-C_{\sf no} \right \}$ bits during the two stages, thus achieving $R_{\sf sum}$ in Theorem~\ref{thm:achievability}.

\subsection{Type III: $\alpha < \frac{2}{3}, \tilde{\alpha} < \frac{1}{2}$}


The only distinction with respect to Type II is that in this regime of $\tilde{\alpha} <\frac{1}{2}$, the effective capacity of the backward IC per user for the purpose of feedback is now $\tilde{n}-\tilde{m}$. This is because the backward IC is now equivalent to two point-to-point channels \emph{composed of private levels only}.
In Fig.~\ref{fig:weak1}, remember that user $\tilde{1}$ fed back $a_1$ to user 2 through the \emph{cross} link, so the transmission rate was limited by $\tilde{m}$. However, in the regime of $\tilde{\alpha} < \frac{1}{2}$, $\tilde{n}- \tilde{m} > \tilde{m}$. This motivates us to consider a better alternative: user $\tilde{k}$ uses $\tilde{n}-\tilde{m}$ private levels for feedback. For example, user $\tilde{1}$ can alternatively feed back $a_2 \oplus b_1$ using a private level, thus allowing user 1 to decode $b_1$ as $a_2$ can be subtracted. Also this private-level transmission does not hurt the other-link transmission.
  Taking this alternative, the effective capacity of the backward IC per user for the purpose of feedback is $\tilde{n}-\tilde{m}$. This way, user $\tilde{k}$ can decode $C_{\sf no} + \min \left \{ 2 \lambda (\tilde{n}-\tilde{m}),  C_{\sf fb}-C_{\sf no} \right \}$ bits during the two stages. This completes the proof.

\section{Proof of Outer Bound}
\label{sec:outerbound}

One can see that it suffices to prove the following bounds:
\begin{align}
\label{eq:cutsetbound} C_{\sf sum} \leq & \min  \left\{  2n + 2\lambda \tilde{n}, \right. \\
\label{eq:perfectfeedbackbound}  &\qquad \left. (n-m)^+ + \max(n,m), \right. \\
\label{eq:newbound}  & \qquad \left. 2 \max (n-m,m) + 2 \lambda \max ( \tilde{n}, \tilde{m}) \right \}.
\end{align}
The proof of the bound (\ref{eq:cutsetbound}) is based on the standard cutset argument. Note that the bound (\ref{eq:perfectfeedbackbound}) matches the perfect-feedback bound~\cite{SuhTse,Sahai:ITW09}; hence it is also an outer bound of our channel. For completeness, we include these proofs in the following subsections. The main focus of this section is to prove the last bound (\ref{eq:newbound}).

\subsection{Proof of~(\ref{eq:cutsetbound})}
Starting with Fano's inequality, we get
\begin{align*}
\begin{split}
N&(R_1 - \epsilon_N) \leq I(W_1;Y_1^{N}, \tilde{Y}_2^N, W_2) \\
& \overset{(a)}{=} \sum H(Y_{1i}, \tilde{Y}_{2i} | W_2, Y_1^{i-1}, \tilde{Y}_2^{i-1}, X_{2i}) \\
& \overset{(b)}{=} \sum H(Y_{1i} | W_2, Y_1^{i-1}, \tilde{Y}_2^{i-1}, X_{2i}) \\
& \quad + \sum H(\tilde{Y}_{2i} | W_2, Y_1^{i}, \tilde{Y}_2^{i-1}, X_{2i}, \tilde{X}_{1i}) \\
&\overset{(c)}{\leq} \sum H(Y_{1i} | X_{2i} )  + \sum H(\tilde{Y}_{2i} | \tilde{X}_{1i}) \\
& \overset{(d)}{\leq}  N (n + \lambda \tilde{n})
\end{split}
\end{align*}
where $(a)$ follows from the fact that $W_1$ is independent of $W_2$, and $X_{2i}$ is a function of $(W_2,\tilde{Y}_2^{i-1})$; $(b)$ follows from the fact that $\tilde{X}_{1i}$ is a function of $Y_1^{i-1}$; $(c)$ follows from the fact that conditioning reduces entropy; $(d)$ follows from the fact that the right-hand-side is maximized when $(X_1, X_2, \tilde{X}_1, \tilde{X}_2)$ are uniform and independent, and $\sum H(\tilde{Y}_{2i}|\tilde{X}_{1i}) \leq N \lambda \tilde{n}$. Similarly we can show $N(R_2 - \epsilon_N) \leq N( n + \lambda \tilde{n})$. If $(R_1,R_2)$ is achievable, then $\epsilon_N \rightarrow 0$ as $N$ tends to infinity. Therefore, we get the desired bound.

\subsection{Proof of~(\ref{eq:perfectfeedbackbound})}
Starting with Fano's inequality, we get
\begin{align*}
\begin{split}
&N(R_1 + R_2- \epsilon_N) \overset{(a)}{ \leq} I(W_1;Y_1^{N}| W_2) + I(W_2;Y_2^{N} )  \\
& = H(Y_1^{N}| W_2) + H(Y_2^{N})  \\
&\quad  -\left \{  H(Y_1^N, Y_2^N | W_2) -  H(Y_1^{N}|  W_2, Y_2^N) \right \} \\
& = H(Y_1^{N}|W_2, Y_2^N) -  H(Y_2^{N}|  W_2, Y_1^N) + H(Y_2^N ) \\
& \leq H(Y_1^{N}|W_2, Y_2^N) + H(Y_2^N ) \\
& \overset{(b)}{=} \sum H(Y_{1i} | W_2, Y_2^N, Y_1^{i-1},  \tilde{X}_1^i, \tilde{X}_{2i}, \tilde{Y}_2^i, X_{2i}, V_{1i}) + H(Y_2^N )  \\
& \overset{(c)}{\leq}  \sum H(Y_{1i} | V_{1i}, X_{2i} ) + \sum H(Y_{2i}) \\
& \leq N \left\{ (n-m)^+ + \max(n,m) \right \}
\end{split}
\end{align*}
where $(a)$ follows from the independence of $(W_1,W_2)$; $(b)$ follows from the fact that $\tilde{X}_{1}^{i}$ is a function of $Y_1^{i-1}$, $X_{2i}$ is a function of $(W_2, \tilde{Y}_2^{i-1} )$, and $V_{1i}$ is a function of $(X_{2i}, Y_{2i})$; $(c)$ follows from the fact that conditioning reduces entropy. This completes the proof.

\subsection{Proof of~(\ref{eq:newbound})}

Starting with Fano's inequality, we get
\begin{align*}
\begin{split}
&N(R_1 + R_2- \epsilon_N) \leq I(W_1;Y_1^{N} ) + I(W_2;Y_2^{N})  \\
&= H(Y_1^{N}) - H(Y_1^N|W_1)  + H(Y_2^{N}) -H(Y_2^{N}|W_2) \\
& \overset{(a)}{=}    H(Y_1^{N},V_1^N )  - H( V_1^N )  +  H(Y_2^{N}, V_2^N)  - H( V_2^N ) \\
&  + \left\{ H (V_1^N) - H( Y_1^N | W_1 ) -  H( V_1^N | Y_1^N)  \right \} \\
& + \left\{ H (V_2^N) - H( Y_2^N | W_2 ) -  H( V_2^N | Y_2^N) \right \} \\
& \overset{(b)}{=}    H(Y_1^{N} | V_1^N )  + H(Y_2^{N} | V_2^N)  \\
&  + \left\{ I (V_1^N; W_2) + H(V_2^N | W_1) - H( Y_1^N | W_1 ) - H( V_1^N | Y_1^N)  \right \} \\
& + \left\{ I  (V_2^N; W_1) + H(V_1^N |W_2) - H( Y_2^N | W_2 ) - H( V_2^N | Y_2^N)  \right \} \\
&\overset{(c)}{\leq} H(Y_1^{N} | V_1^N )  + H(Y_2^{N} | V_2^N)  + H(\tilde{Y}_1^N|W_1) + H(\tilde{Y}_2^N|W_2) \\
&\overset{(d)}{\leq} \sum \left \{ H(Y_{1i} | V_{1i})  + H(Y_{2i} | V_{1i} )  + H(\tilde{Y}_{1i}) + H(\tilde{Y}_{2i}) \right \} \\
&\overset{(e)} \leq 2N \max (n-m,m) + 2N \lambda \max (\tilde{n}, \tilde{m})
\end{split}
\end{align*}
where $(a)$ and $(b)$ follow from a chain rule; $(c)$ follows from Claim~\ref{claim:blancebound} (see below); $(d)$ follows from the fact that conditioning reduces entropy; $(e)$ follows from
$\sum H(\tilde{Y}_{ki}) \leq N \lambda \max(\tilde{n},\tilde{m})$.
Therefore, we get the desired bound.

\begin{claim}
\label{claim:blancebound}
For $(k,l) = (1,2)$ or $(k,l) = (2,1)$,
\begin{align*}
\begin{split}
& I (V_k^N; W_l)  +  H(V_l^N | W_k)  -  H( Y_k^N | W_k )  -   H( V_k^N | Y_k^N)  \\
&\leq H(\tilde{Y}_k^N |W_k).
\end{split}
\end{align*}
\end{claim}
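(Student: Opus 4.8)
\textbf{Proof plan for Claim~\ref{claim:blancebound}.}
The plan is to work with the pair $(k,l)=(1,2)$; the other case is symmetric. The left-hand side groups into two kinds of terms: the ``information'' terms $I(V_1^N;W_2)$ and $H(V_2^N\mid W_1)$, which I want to absorb into entropies of the forward signals, and the ``subtracted'' terms $-H(Y_1^N\mid W_1)$ and $-H(V_1^N\mid Y_1^N)$, which should cancel most of what the first two produce, leaving only a feedback-signal entropy. The first step is to rewrite $I(V_1^N;W_2)=H(V_1^N)-H(V_1^N\mid W_2)$ and then combine $H(V_1^N)-H(V_1^N\mid Y_1^N)=I(V_1^N;Y_1^N)$, which is at most $H(Y_1^N)-H(Y_1^N\mid V_1^N,\text{stuff})$; more usefully, I expect to instead write things so that $-H(V_1^N\mid Y_1^N)$ pairs with $H(V_1^N)$ coming out of the mutual information, giving $I(V_1^N;Y_1^N)\le H(Y_1^N)-H(Y_1^N\mid W_1,\dots)$, but the cleaner route is to keep $-H(Y_1^N\mid W_1)$ intact and bound $I(V_1^N;Y_1^N)$ by it plus a feedback term.

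Concretely, the key identity I would exploit is that in the ADT model the forward output $Y_1^i$ together with $W_1$ (hence $X_1^i$, and via the feedback maps $\tilde X_1^i$) essentially determines the feedback signal $\tilde Y_1^i$ and vice versa the relevant cross-part, so that $H(Y_1^N\mid W_1)$ is (up to the feedback-capacity slack) the entropy injected by user~$2$'s interference plus the feedback received; this is exactly where $H(\tilde Y_1^N\mid W_1)$ enters. The steps, in order: (i) expand $I(V_1^N;W_2)=H(V_1^N)-H(V_1^N\mid W_2)$ and note $H(V_1^N\mid W_2)=H(V_1^N\mid W_2)\ge 0$ so this is bounded by $H(V_1^N)$; (ii) use $H(V_1^N)-H(V_1^N\mid Y_1^N)=I(V_1^N;Y_1^N)\le H(Y_1^N\mid W_1)+H(\tilde Y_1^N\mid W_1)$ — this is the heart of the matter, proven by noting $Y_1^N$ is a deterministic function of $(W_1,W_2)$ and of $(X_1^N,\tilde X_2^N$-visible-part$)$, and that conditioned on $W_1$ the new randomness in $Y_1^i$ beyond what feedback $\tilde Y_1^{i-1}$ already revealed is carried by $V_2^i$; (iii) handle the remaining term $H(V_2^N\mid W_1)$ by observing $V_2^N$ is (conditioned on $W_1$) determined by $W_2$ together with the feedback $\tilde Y_2^{N-1}$ that user~$2$ has seen, but the only part of that feedback not already known given $W_1$ is what flowed back from $Y_1$, which is again controlled by $H(\tilde Y_1^N\mid W_1)$ — no, more carefully, $H(V_2^N\mid W_1)\le H(Y_1^N\mid W_1)$ since $V_2$ is the part of $X_2$ visible at user~$\tilde1$ and thus a function of $Y_1^{N}$ given... ; the accounting must show the two negative terms exactly kill $H(V_1^N)+H(V_2^N\mid W_1)$ minus one surviving $H(\tilde Y_1^N\mid W_1)$.

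The cleanest organization I would actually write: introduce the genie signal $\tilde Y_1^N$ and use the chain
\[
H(V_2^N\mid W_1)-H(Y_1^N\mid W_1)\le H(V_2^N\mid W_1)-H(V_2^N,\tilde Y_1^N\mid W_1)\le H(\tilde Y_1^N\mid W_1,V_2^N)\le H(\tilde Y_1^N\mid W_1),
\]
valid because $(Y_1^N)$ determines $(V_2^N,\tilde Y_1^N)$ given $W_1$ (the former since $V_2^i$ is the interference seen at user~$\tilde1$, the latter since $\tilde Y_1^i$ is a function of $Y_1^{i-1}$), so $H(Y_1^N\mid W_1)\ge H(V_2^N,\tilde Y_1^N\mid W_1)=H(V_2^N\mid W_1)+H(\tilde Y_1^N\mid W_1,V_2^N)$; and separately $I(V_1^N;W_2)-H(V_1^N\mid Y_1^N)\le H(V_1^N)-H(V_1^N\mid Y_1^N)=I(V_1^N;Y_1^N)\le H(Y_1^N\mid W_1)$? — this last inequality fails in general, so instead I bound $I(V_1^N;W_2)\le H(V_1^N\mid Y_1^N,W_1)$ is false too; the right move is $I(V_1^N;W_2)-H(V_1^N\mid Y_1^N)=H(V_1^N)-H(V_1^N\mid W_2)-H(V_1^N\mid Y_1^N)\le H(V_1^N\mid Y_1^N, W_1)$ plus remainder — this bookkeeping is the main obstacle, and the intended slick proof surely uses the symmetry between the $k$ and $l$ roles together with the feedback maps $\tilde X_1^i=\tilde f_{1i}(Y_1^{i-1})$ and $X_{2i}=f_{2i}(W_2,\tilde Y_2^{i-1})$ to telescope. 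So the hard part, and where I would spend the effort, is choosing the grouping of the four terms and the auxiliary conditioning so that every entropy produced by the two positive information terms is exactly matched by $-H(Y_1^N\mid W_1)$ and $-H(V_1^N\mid Y_1^N)$, with residue precisely $H(\tilde Y_1^N\mid W_1)$; once the grouping is right, each inequality is a one-line application of "conditioning reduces entropy" and "$V_k^i$ is a function of $(X_{li},Y_{li})$" / "$\tilde X_k^i$ is a function of $Y_k^{i-1}$".
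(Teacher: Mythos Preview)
Your write-up is not yet a proof: it is a sequence of trial groupings, several of which you yourself flag as failing, and it ends by conceding that ``choosing the grouping'' is still the open hard part. So there is a genuine gap, and in addition one of the attempted steps contains an error.

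The error is in your ``cleanest organization''. You claim that $Y_1^N$ determines $(V_2^N,\tilde Y_1^N)$ given $W_1$, justifying the second component by ``$\tilde Y_1^i$ is a function of $Y_1^{i-1}$''. That is false: $\tilde X_{1i}=\tilde f_{1i}(Y_1^{i-1})$ is such a function, but the \emph{received} feedback $\tilde Y_{1i}$ at user~$1$ also depends on $\tilde X_{2i}=\tilde f_{2i}(Y_2^{i-1})$ through the backward IC. For the same reason you cannot recover $X_{1i}$ (hence $V_{2i}$) from $(W_1,Y_1^{i})$ alone once $i\ge 2$, because $X_{1i}$ depends on $\tilde Y_1^{i-1}$. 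So the inequality $H(Y_1^N\mid W_1)\ge H(V_2^N,\tilde Y_1^N\mid W_1)$ is unjustified and the chain built on it collapses.

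The idea you are missing is \emph{which} auxiliary conditioning makes the pair $H(V_2^N\mid W_1)-H(Y_1^N\mid W_1)$ vanish. The paper conditions \emph{both} entropies on $\tilde Y_1^N$. Since $X_1^N$ is a function of $(W_1,\tilde Y_1^{N-1})$, once $(W_1,\tilde Y_1^N)$ is given the forward output $Y_1^N$ and the interference $V_2^N$ determine each other, so
\[
H(V_2^N\mid W_1,\tilde Y_1^N)-H(Y_1^N\mid W_1,\tilde Y_1^N)=0,
\]
and the price of introducing this conditioning is exactly
\[
I(V_2^N;\tilde Y_1^N\mid W_1)-I(Y_1^N;\tilde Y_1^N\mid W_1)=H(\tilde Y_1^N\mid W_1,Y_1^N)-H(\tilde Y_1^N\mid W_1,V_2^N).
\]
Drop the negative piece. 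Expand the surviving $H(\tilde Y_1^N\mid W_1,Y_1^N)$ by the chain rule with $V_1^N$: the part $H(V_1^N\mid W_1,Y_1^N)$ is at most $H(V_1^N\mid Y_1^N)$, which cancels your last negative term, and the part $H(\tilde Y_1^N\mid W_1,Y_1^N,V_1^N)$ is at most $H(\tilde Y_1^N\mid W_1,V_1^N)$. Finally $I(V_1^N;W_2)\le I(W_1,V_1^N;W_2)=H(V_1^N\mid W_1)$ (the channel is deterministic, so $H(V_1^N\mid W_1,W_2)=0$), and
\[
H(V_1^N\mid W_1)+H(\tilde Y_1^N\mid W_1,V_1^N)=H(V_1^N,\tilde Y_1^N\mid W_1)=H(\tilde Y_1^N\mid W_1),
\]
the last equality because $V_1^N$ is a function of $(W_1,\tilde Y_1^{N-1})$. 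The step you could not locate is precisely the exact cancellation after conditioning on $\tilde Y_1^N$; once you have it, everything else is, as you anticipated, one-line applications of ``conditioning reduces entropy'' and the functional relations.
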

\begin{proof}
By symmetry, it is enough to prove only one case.
\begin{align*}
\begin{split}
 & I (V_1^N; W_2) + H(V_2^N | W_1) - H( Y_1^N | W_1 ) - H( V_1^N | Y_1^N)  \\
 & \overset{(a)}{=}  I (V_1^N; W_2) + \left \{ H(V_2^N | W_1, \tilde{Y}_1^N ) - H( Y_1^N | W_1, \tilde{Y}_1^N ) \right \} \\
&\quad +  I ( V_2^N; \tilde{Y}_1^N | W_1) - I ( Y_1^N; \tilde{Y}_1^N | W_1)    - H( V_1^N | Y_1^N) \\
 & \overset{(b)}{=}   I (V_1^N; W_2) +   H( \tilde{Y}_1^N | W_1, Y_1^N) \\
 & \quad - H( \tilde{Y}_1^N | W_1, V_2^N)  - H( V_1^N | Y_1^N) \\
 & \overset{(c)}{\leq} I (V_1^N; W_2) +   H( \tilde{Y}_1^N | W_1, Y_1^N, V_1^N)  \\
 &\quad + H( V_1^N | W_1, Y_1^N ) - H( V_1^N | Y_1^N)    \\
 & \overset{(d)}{\leq} I (V_1^N; W_2) +   H( \tilde{Y}_1^N | W_1, V_1^N  )  \\
 & \leq   I (W_1, V_1^N; W_2) +   H( \tilde{Y}_1^N | W_1, V_1^N)  \\
 & =   H (V_1^N |W_1) +   H( \tilde{Y}_1^N | W_1, V_1^N)  =  H (\tilde{Y}_1^N, V_1^N | W_1)     \\
& \overset{(e)}{=}    H (\tilde{Y}_1^N | W_1)
\end{split}
\end{align*}
where $(a)$ follows from a chain rule; $(b)$ follows from $H(Y_{1}^N | W_1, \tilde{Y}_1^{N} ) = H(V_{2}^N | W_1, \tilde{Y}_1^{N})$ due to the fact that $X_{1}^N$ is a function of $(W_1,\tilde{Y}_1^{N-1})$; $(c)$ follows from a chain rule and the fact that entropy is non-negative; 
$(d)$ follows from the fact that conditioning reduces entropy; $(e)$ follows from the fact that $V_1^N$ is a function of $(W_1, \tilde{Y}_1^{N-1})$.
\end{proof}

\section{Two-way IC with Four Messages}
\label{sec:4messageTwo-wayIC}
Motivated by the fact that feedback can provide a net capacity gain, we now explore the role of interaction in a more general scenario where there are four messages in total: two forward messages; and two additional independent backward messages from user $\tilde{k}$ to user $k$, $k=1,2$.
In this scenario, the encoded signal $\tilde{X}_{ki}$ of user $\tilde{k}$ is now a function of $(\tilde{W}_k, Y_k^{i-1})$ instead of $Y_k^{i-1}$ only.





We will demonstrate from an example in Fig.~\ref{fig:4message_example} that interaction can improve the non-interactive rate significantly. For simplicity, we focus on a sum-rate pair of the forward and backward messages, denoted by $(R_{\sf sum}, \tilde{R}_{\sf sum}):= (R_1 +R_2, \tilde{R}_1 + \tilde{R}_2)$. Note that the non-interactive  capacity region is $\{ (R_{\sf sum}, \tilde{R}_{\sf sum}): R_{\sf sum} \leq 2, \tilde{R}_{\sf sum}=0 \}$~\cite{ElGamal:it82,bresler:europe}. On the other hand, we will show that interaction gives:
\begin{align}
\{ (R_{\sf sum}, \tilde{R}_{\sf sum}): \tilde{R}_{
\sf sum} \leq 1, R_{\sf sum} +  \tilde{R}_{\sf sum} \leq 3 \}.
\end{align}
Note that interaction provides an arbitrarily large gain in capacity. $\tilde{R}_{\sf sum}$ can be increased up to 1 from 0, which implies an $\infty \%$ improvement. With Type-I and Type-II schemes in Section~\ref{sec:achievability}, we can achieve the $(0,1)$ and $(3,0)$ points respectively. On the other hand, a new idea emerges to achieve a corner point of $(2,1)$.

\begin{figure}[t]
\begin{center}
{\epsfig{figure=./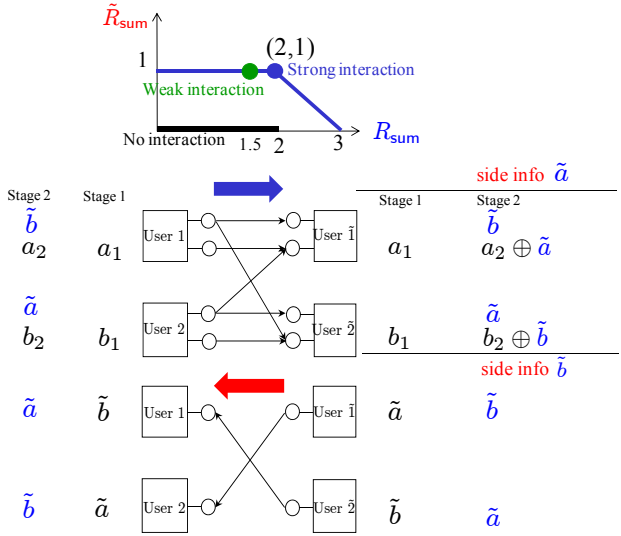, angle=0, width=0.45\textwidth}}
\end{center}
\caption{Two-way interference channel with four messages. Illustration of an achievable scheme for $(R_{\sf sum},\tilde{R}_{\sf sum})=(2,1)$.} \label{fig:4message_example}
\end{figure}

The example in Fig.~\ref{fig:4message_example} shows an achievable scheme for the $(2,1)$ point. Here we will demonstrate that during two stages, user 1 and 2 can send $(a_1,a_2)$ and $(b_1, b_2)$ respectively, while user $\tilde{1}$ and $\tilde{2}$ can transmit $\tilde{a}$ and $\tilde{b}$ respectively. In the first stage, user 1 and 2 send $a_1$ and $b_1$ using its own private level respectively. Meanwhile user $\tilde{1}$ and $\tilde{2}$ send $\tilde{a}$ and $\tilde{b}$ respectively through the backward IC. User 1 then gets the unwanted information $\tilde{b}$ and similarly user 2 receives $\tilde{a}$. In the second stage, through the forward IC, user 1 feeds $\tilde{b}$ back to user $\tilde{1}$ using the top level, and similarly user 2 feeds $\tilde{a}$ back to user $\tilde{2}$. Here the key observation is that this feedback transmission comes \emph{for free}, i.e., it does not hurt forward-message transmission of $(a_2,b_2)$. Notice that $\tilde{a}$ and $\tilde{b}$ are user $\tilde{1}$'s and $\tilde{2}$'s own information respectively. This allows user 1 and 2 to send their own forward information $a_2$ and $b_2$ without being interfered. In other words, exploiting $\tilde{a}$ as side information, user $\tilde{1}$ can decode $a_2$, and similarly user $\tilde{2}$ can decode $b_2$. Upon receiving $(\tilde{b},\tilde{a})$, user $\tilde{1}$ and $\tilde{2}$ transmit the other user's information respectively, thus enabling user 1 and 2 to decode their desired signals. Therefore, we can achieve $(2,1)$.

{\bf Mixing Forward-and-Backward Messages:}
Interestingly, this interactive scheme includes the mixture of forward-and-backward messages. Note that in the second stage, user 1 sends  $(\tilde{b},a_2)$ at the same time. We say that interaction is \emph{strong} if the mixture is allowed. On the other hand, if the mixture is not allowed (that we call \emph{weak} interaction), the performance is degraded. For example, we can show that given the constraint of $\tilde{R}_{\sf sum}=1$, the weak interaction provides at most 1.5 bits for $R_{\sf sum}$. See Appendix~\ref{appen:proofCweak} for the proof.
 This shows that interaction can provide larger gain when allowing for the mixture of different messages.

\section{Discussion}
\label{section:Comparison}

\subsection{Net Feedback Gain}
\begin{figure}[t]
\begin{center}
{\epsfig{figure=./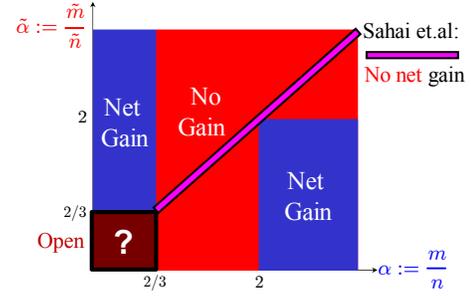, angle=0, width=0.33\textwidth}}
\end{center}
\caption{Two-way IC with two forward messages: Net gain.} \label{fig:netgainregime}
\end{figure}

Using Theorems~\ref{thm:achievability} and~\ref{thm:outerbound}, we identify channel regimes where feedback can provide a net gain in capacity. See Fig.~\ref{fig:netgainregime}. In fact, Sahai~\emph{et.al.}~\cite{Sahai:ITW09} also considered the two-way ADT deterministic IC with two forward messages, and showed that for $\alpha=\tilde{\alpha} \geq \frac{2}{3}$, there is no net feedback gain. In this work, we consider arbitrary forward-and-backward ICs. As a result, for the regimes of $\frac{2}{3} \leq \alpha \leq 2$ and $(\alpha \geq 2, \tilde{\alpha} \geq 2 )$, we obtain the similar result: there is no net feedback gain. For $(\alpha < \frac{2}{3}, \tilde{\alpha} > \frac{2}{3} )$ or $(\alpha > 2, \tilde{\alpha} < 2 )$, however, we show that feedback can provide a net gain in capacity. Here we say that net feedback gain occurs if there exists $\lambda$ such that the capacity gain due to feedback is strictly larger than the capacity gain due to independent-message transmission. For the remaining regime $(\alpha <\frac{2}{3}, \tilde{\alpha} < \frac{2}{3})$, whether or not feedback provides net gain remains open. The optimality proof of our achievable scheme with a new converse proof will show no net gain in this regime. Or enhancing Type-II or Type-III schemes (if possible) will show net feedback gain in this regime.

\subsection{Potential to Broadband Systems}

\begin{figure}[t]
\begin{center}
{\epsfig{figure=./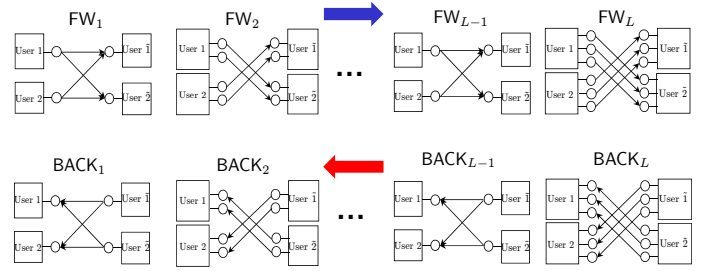, angle=0, width=0.5\textwidth}}
\end{center}
\caption{Two-way parallel IC with four messages. The rich diversity on channel gains across many parallel subchannels can often occur in broadband systems.} \label{fig:broadband}
\end{figure}

In Fig.~\ref{fig:netgainregime}, we have seen that there is net feedback gain when the forward IC well matches with the backward IC. We can also check that the gain is significant when there is strong asymmetry between $\alpha$ and $\tilde{\alpha}$. While this strong asymmetry is not likely to occur in narrowband systems, it can often occur in broadband systems where there are a multitude of subchannels with a wide dynamic range of channel gains. For example, in 3GPP-LTE and WiMAX systems, we can easily expect rich diversity on channel gains, since an operating bandwidth of the systems (around 20 MHz) is much larger than coherence bandwidth of typical wireless channels (around the order of 0.1 MHz).

Fig.~\ref{fig:broadband} illustrates an example which can represent this scenario where there are a variety of parallel subchannels. Using our results, we can see that pairs of $({\sf FW}_1, {\sf BACK}_2)$ and $({\sf FW}_2, {\sf BACK}_1)$, for instance, can provide significant gain with interaction. One more interesting observation is that even though forward-and-backward parallel subchannels are identical, there exist many pairs of forward-backward subchannels that can provide a net capacity gain. In tomorrow's communication systems, we expect a broader system bandwidth to support a variety of multimedia services. Therefore, we believe that our feedback idea will provide more significant insights into the design of future communication systems.


\section{Conclusion}
\label{sec:conclusion}

For the two-way ADT deterministic IC, we developed three types of achievable schemes and derived outer bounds, thereby establishing the sum capacity except for the regime of $(\alpha < \frac{2}{3}, \tilde{\alpha} <1)$. As a consequence, we developed a new viewpoint on the use of the backward IC: the backward IC can be more efficiently used for feedback rather than if it were used for sending its own backward messages. The gain comes from the fact that the channel use for feedback enables the exploitation of side information at forward-message-senders to make the backward IC effectively more capable. Our future work is along several new directions: (1) Extending to general channel settings; (2) Exploring the four-message two-way IC further; (2) Translating to the Gaussian channel.

\appendices

\section{Proof of $C_{\sf sum} = 1.5$ when $\tilde{R}_{\sf sum} =1$}
\label{appen:proofCweak}
For achievability, we split the forward channel into two parts: assigning the $\frac{1}{4}$ fraction of time for sending feedback signals for aiding backward-message transmission; assigning the remaining $\frac{3}{4}$ fraction of time for sending its own forward messages. Then, using Type-I scheme, we can easily achieve $(1.5,1)$. For converse, we split each channel into two parts and then use Theorem~\ref{thm:outerbound}. Similar to $\lambda$, we define $\tilde{\lambda}$ as the fraction of time that the forward channel uses for feedback. Using Theorem~\ref{thm:outerbound}, we then get:
\begin{align*}
R_{\sf sum} &\leq \min \left\{ 2 (1 - \tilde{\lambda}) + 2 \lambda, 3 (1 - \tilde{\lambda}) \right \}, \\
\tilde{R}_{\sf sum} &\leq \min \left\{ 4 \tilde{\lambda}, 1 - \lambda \right \}.
\end{align*}
Since $\tilde{R}_{\sf sum}=1$, $\lambda=0$ and $\tilde{\lambda}=\frac{1}{4}$. Therefore, $R_{\sf sum} \leq 1.5$.
%

\bibliographystyle{ieeetr}
\bibliography{bib_feedback}

\end{document}